\documentclass[aps,prl,twocolumn,nofootinbib]{revtex4-2}

\usepackage{amsfonts,amssymb,amsmath}
\usepackage{amsthm}
\usepackage{hyperref}
\usepackage{booktabs}

\newcommand{\cA}{\mathcal{A}}

\newcommand{\eps}{\varepsilon}

\allowdisplaybreaks
\newtheorem{lemma}{Lemma}
\newtheorem{proposition}{Proposition}

\begin{document}

\title{String-Monodromy Rigidity from Six-Point Consistency}
\author{Jianqi Sheng}
\affiliation{Department of Physics, City University of Hong Kong, Hong Kong}
\email{sheng.jq@cityu.edu.hk}

\begin{abstract}
Deriving string scattering from physical consistency, without assuming a
worldsheet, is a central rigidity problem.  Tree-level locality and
factorization across multiplicities are unavoidable, yet finite-order
expansions cannot decide whether they close into an exact string structure.
We convert two exact six-point residues of a massless doubly ordered identity
model into a scalar pentagon and the odd Fay identity.  They force any analytic
four-point null kernel to obey $q''=\kappa q$, leaving only linear,
trigonometric, and hyperbolic branches.  For a crossing-symmetric target in the
opposite KLT module, positive unsubtracted dispersion with nonzero heavy
spectral weight selects the trigonometric branch and conditionally fixes the
Veneziano germ.  A two-scale counterexample isolates module entrance as the
remaining independent input.  Thus finite-multiplicity consistency controls
an all-orders structure and exposes the precise obstruction to a
first-principles derivation.
\end{abstract}

\maketitle

Deriving the characteristic structures of string scattering from S-matrix
principles, rather than assuming a worldsheet, is a central rigidity problem.
The Veneziano amplitude organizes an infinite spectrum through crossing, a
linearly rising tower of states, and relations between different orderings
\cite{Veneziano:1968yb,Plahte:1970wy}.  Worldsheet contour deformations explain
the resulting monodromy relations \cite{BjerrumBohr:2010hn}, and the same sine
kernel enters the inverse string KLT kernel \cite{Mizera:2016jhj}.  That
construction, however, does not distinguish what is specifically worldsheet
input from what is forced by scattering consistency.  Locality and
factorization must hold in every allowed channel and agree across particle
multiplicities.  The basic question is whether these unavoidable constraints
already determine the sine kernel or leave an infinite functional freedom.

The KLT bootstrap isolates this question in doubly ordered amplitudes
\cite{Chi:2021mio}.  Effective-field-theory calculations reproduce the
open-string monodromy pattern to high derivative order
\cite{Chen:2022shl,Chen:2023dcx}, while complementary approaches constrain
string amplitudes through massive-pole factorization or through ultrasoftness
and minimal Regge zeros
\cite{ArkaniHamed:2024rigidity,Cheung:2026strings}.  For the KLT bootstrap, the
decisive gap is exact closure: agreement through any finite derivative order
cannot exclude deformations at higher order.  Computing more coefficients
therefore cannot by itself establish rigidity.  What is needed is a functional
identity, derived at finite multiplicity, that controls the full analytic
kernel at once.

Here we supply such a mechanism for a four-point null kernel in the massless
doubly ordered identity model, using consistency only through six points.  Two
independent six-point residues produce a scalar pentagon and the odd Fay
identity, which reduce the analytic kernel to the linear, trigonometric, and
hyperbolic solutions of $q''=\kappa q$.  Under the stated assumptions,
finite-multiplicity factorization therefore fixes an all-orders structure
normally associated with the open-string worldsheet.  For a physical target
in the opposite KLT module, positive unsubtracted dispersion with nonzero heavy
spectral weight selects the trigonometric branch, and moment determinacy fixes
the Veneziano germ up to scale.  A two-scale counterexample proves that module
entrance does not follow from ordinary four-point axioms.  The theorem thus
separates an exact kernel-rigidity mechanism from the independent entrance
problem that any more primitive derivation must solve.

Let $m_n[\alpha|\beta]$, $3\leq n\leq6$, be massless doubly ordered tree
amplitudes.  For fixed left ordering $\alpha$, impose ordinary field-theory KK and
fundamental BCJ relations on the right ordering $\beta$
\cite{Kleiss:1988ne,Bern:2008qj}.  At four points, with $u=-s-t$, set
\begin{equation}
\begin{gathered}
 f_1(s,t)=m_4[1234|1234],\\
 f_2(s,t)=m_4[1234|1243],\\
 f_2(s,t)=-\frac{F(s,t)}s,\\
 f_1(s,t)=\frac{t}{u}f_2(s,t),\\
 F(0,t)=1,\qquad F(s,t)=F(-s-t,t).
\end{gathered}
\label{eq:four-point-letter}
\end{equation}
We assume that $F$ is an analytic germ.  The remaining identity-model
hypotheses are: only simple poles common-planar in both orderings; exact
factorization, with $m_3[123|123]=1$ and $m_3[123|132]=-1$; cyclic covariance
of both orderings; simultaneous double reversal
$m_n[\alpha^T|\beta^T]=m_n[\alpha|\beta]$; and right KK--BCJ at five and six
points.  No independent left reflection or left monodromy is assumed.
\label{hyp:identity}
The six-point identities are first evaluated at generic complex Mandelstam
kinematics and then continued meromorphically.  The Supplemental Material
(SM) exhibits nonzero rank-five Gram determinants on both functional slices,
so no hidden low-dimensional degeneration is used
\cite{[{See Supplemental Material at }][{, which includes Refs.~[5,6,10,11,13,14],
and gives ordering conventions, six-point BCJ residues, five-point elimination,
Gram certificates, saturation details, cocycle and Fay lemmas, and a
premise-deletion audit.}]supp}.

We prove that these hypotheses force a locally analytic $\gamma$, normalized
by $\gamma(0)=1$, such that
\begin{equation}
 F(s,t)=\frac{\gamma(s)\gamma(-s-t)}{\gamma(-t)},\qquad
 q(x)=\frac{x}{\gamma(x)\gamma(-x)}.
\label{eq:gamma-letter}
\end{equation}
Moreover, $q(0)=0$, $q'(0)=1$, and $q''=\kappa q$.  The proof has two
independent steps, each descending from an exact six-point residue.

For a five-letter ordering define adjacent invariants $x_i=s_{i,i+1}$ and
\begin{equation}
\begin{gathered}
G_1=m_5[12345|12345],\quad G_2=m_5[12345|12354],\\
g_3=m_5[12345|12453],\quad g_4=m_5[12345|12543].
\end{gathered}
\end{equation}
Complete Gaussian elimination of the six five-point fundamental BCJ
relations and the required KK shuffles gives
\begin{equation}
\begin{split}
g_3={}&\{x_3x_5G_1-(x_2+x_3)(x_2-x_4-x_5)G_2\}/\Delta_3,\\
g_4={}&\{x_5(x_3+x_4-x_1-x_2)G_1\\
&-(x_3+x_4-x_1)(x_2-x_4-x_5)G_2\}/\Delta_4,
\end{split}
\label{eq:five-reduction-letter}
\end{equation}
where $\Delta_3=(x_4-x_1-x_2)(x_2+x_3-x_5)$ and
$\Delta_4=(x_4-x_1-x_2)(x_1+x_5-x_3)$.  This is an identity of meromorphic
functions after clearing the displayed basis denominators.

The two exact six-point factorization identities are
\begin{subequations}
\begin{align}
0={}&\big[s_{34}f_2(s_{34},s_{24})f_2(s_{56},s_{15})
-s_{45}g_4[P_{12}3456]\big]_{\mathcal K_1},
\label{eq:six-first-letter}\\
0={}&\big[s_{23}f_2(s_{23},s_{24})f_1(s_{16},s_{15})
-s_{26}g_3[6123P_{45}]\nonumber\\
&+f_1(s_{123},s_{36})-s_{46}g_3[P_{12}3456]\big]_{\mathcal K_2},
\label{eq:six-second-letter}
\end{align}
\label{eq:six-identities}
\end{subequations}
with
$\mathcal K_1:\ s_{14}=s_{26}=s_{46}=s_{234}=s_{12}=0$ and
$\mathcal K_2:\ s_{25}=s_{14}=s_{234}=s_{12}=s_{45}=0$.
They follow by ordered residues from two displayed seven-term six-point BCJ
relations.  The SM gives every survivor, cubic sign, and compatible split.

On $\mathcal K_1$, right BCJ and Eq.~\eqref{eq:five-reduction-letter} imply
\begin{equation}
\begin{split}
\left.G_1\right|_{x_1=x_3+x_4}
={}&\frac{(x_2+x_3)(x_4+x_5)}{x_2x_5}\\
&\times f_2(x_3,-x_2-x_3)f_2(x_4,-x_4-x_5).
\end{split}
\label{eq:first-slice-letter}
\end{equation}
The $G_2$ coefficient vanishes identically on this hyperplane.  Comparing
Eq.~\eqref{eq:first-slice-letter} with its cyclic image on the transverse
intersection $x_5=x_2+x_3$, and writing $(a,b,c)=(x_2,x_3,x_4)$, cancels the
same rational prefactor and gives
\begin{equation}
\begin{split}
F(a,-a-b)F(c,-a-b-c)={}&F(a,-a-b-c)\\
&\times F(b,-b-c).
\end{split}
\label{eq:F-pentagon-letter}
\end{equation}
Thus $B(a,b)=F(a,-a-b)$ is symmetric, normalized, and obeys
\begin{equation}
B(a,b)B(a+b,c)=B(a,b+c)B(b,c).
\label{eq:pentagon-letter}
\end{equation}
For nonzero analytic $B$, taking an analytic logarithm and differentiating
this cocycle equation at $c=0$ proves
$B(a,b)=\gamma(a)\gamma(b)/\gamma(a+b)$, which is
Eq.~\eqref{eq:gamma-letter}.  The first six-point identity therefore leaves
only one arbitrary analytic cochain $\gamma$.

The second identity removes its kernel-relevant freedom.  On $\mathcal K_2$
choose $a=s_{35}$, $b=s_{36}$, $c=s_{46}$, $d=s_{56}$ and set
\begin{equation}
\begin{gathered}
X=(A,0,C,D,E),\\
\tau X=(E-C,D-A,E,D),\qquad \tau^2=1,
\end{gathered}
\end{equation}
where $(A,C,D,E)=(a,c,-b-c-d,c+d)$.  Simultaneous double reversal and
Eq.~\eqref{eq:four-point-letter} turn Eq.~\eqref{eq:six-second-letter} into
\begin{equation}
\begin{gathered}
(A-D)h(X)-C h(\tau X)=R(X),\\
h(A,C,D,E)=g_3(A,0,C,D,E),
\end{gathered}
\label{eq:transport-letter}
\end{equation}
where
\begin{equation}
\begin{split}
R(X)={}&\frac{C(A+E-C)}{E-C}f_2(C,A-D-C)
\\
&\times f_2(A,-A-E+C)\\
&+\frac{D+E}{D}f_2(E,-D-E).
\end{split}
\end{equation}
The involutive consistency of Eq.~\eqref{eq:transport-letter} alone reduces
to four-point right BCJ.  New information appears at $D+E=0$.

This endpoint must be saturated in an ordered way.  Before setting the
second adjacent invariant to zero, put $x_2=\eps$ and impose
$\eps=x_4+x_5$.  The unwanted $G_2$ coefficient in
Eq.~\eqref{eq:five-reduction-letter} then vanishes exactly.  Common-planar
locality excludes an $x_2^{-1}$ pole in $g_3$, so meromorphic continuation
permits $\eps\to0$.  Cyclic images of Eq.~\eqref{eq:first-slice-letter} give
\begin{equation}
\begin{split}
h(X)&=\frac DA f_2(D,-C-D)f_2(-D,D-A),\\
h(\tau X)&=\frac D{C+D}f_2(-D,A)f_2(D,C).
\end{split}
\label{eq:saturated-letter}
\end{equation}
Substitution into Eq.~\eqref{eq:transport-letter}, followed by
Eqs.~\eqref{eq:four-point-letter} and \eqref{eq:gamma-letter}, cancels every
one-sided factor of $\gamma$ and yields
\begin{equation}
q(A)q(C)+q(D)q(A-C-D)-q(A-D)q(C+D)=0.
\label{eq:q-identity-letter}
\end{equation}
With $(x,y,z)=(A,C+D,D)$ and oddness of $q$, this is the odd Fay relation
\begin{equation}
q(x-y)q(z)+q(y-z)q(x)+q(z-x)q(y)=0.
\label{eq:fay-letter}
\end{equation}
Differentiating at $y=0$ gives
$q(z-x)=q'(x)q(z)-q'(z)q(x)$; differentiating twice more at $z=0$ gives
\begin{equation}
q''(x)=\kappa q(x),\qquad \kappa=q'''(0).
\label{eq:fay-ode-letter}
\end{equation}
Conversely, the addition laws for the solutions imply Fay.  Hence the only
real analytic normalized branches are
\begin{equation}
\begin{gathered}
q(x)=x,\quad \frac{\sin(\lambda x)}{\lambda},\quad
\frac{\sinh(\lambda x)}{\lambda},\\
\frac{f_2(u,t)}{f_2(t,u)}=\frac{q(t)}{q(u)}.
\end{gathered}
\label{eq:branches-letter}
\end{equation}
This proves the six-point kernel-rigidity theorem.

The theorem concerns a left-null relation of the doubly ordered identity
model.  To constrain a physical ordering vector $\cA_L$, we must separately
assume that it belongs to the opposite KLT column module.  Its annihilation by
the left null vector, together with crossing, promotes the trigonometric
ratio to complete complex monodromy:
\begin{equation}
\begin{gathered}
\sin(\lambda s)A(s,u)=\sin(\lambda t)A(t,u),\\
A(s,t)+e^{i\lambda s}A(s,u)+e^{-i\lambda t}A(t,u)=0.
\end{gathered}
\label{eq:target-monodromy-letter}
\end{equation}
The lowest forward coefficient is
\begin{equation}
A(s,-s)=s^{-2}+a_{0,0}+O(s),\qquad a_{0,0}=-\kappa/6.
\label{eq:first-moment-letter}
\end{equation}
For an unsubtracted fixed-$t$ dispersion relation, positivity gives
$a_{0,0}\geq0$, excluding the hyperbolic branch.  Nonzero heavy spectral
weight excludes the linear branch.  Thus $\kappa=-\lambda^2$ and
$\lambda=\pi\alpha'>0$.

There is then a precise conditional amplitude statement.  Suppose the target
is crossing symmetric and in the opposite KLT module, has a convergent
two-channel germ below its first heavy state, and obeys an unsubtracted
fixed-$t$ dispersion relation with nonnegative partial-wave densities and
nonzero heavy weight.  Wan and Zhou showed that the all-order monodromy
recursion, combined with determinacy of the Hausdorff moment problem
\cite{Wan:2026,Schmuedgen:2017}, fixes every coefficient to the germ of
\begin{equation}
A_{\rm V}(\alpha';s,t)=-\alpha'^2
\frac{\Gamma(-\alpha's)\Gamma(-\alpha't)}
{\Gamma(1-\alpha'(s+t))}.
\label{eq:veneziano-letter}
\end{equation}
Meromorphic continuations agree wherever both exist on a connected domain.
Known residue-positivity results show that the assumed positive class is
nonempty \cite{ArkaniHamed:2022gsa,Mansfield:2025}.

Module entrance is indispensable.  For distinct positive scales
$\alpha_1,\alpha_2$ and $0<w<1$,
\begin{equation}
A=wA_{\rm V}(\alpha_1)+(1-w)A_{\rm V}(\alpha_2)
\label{eq:mixture-letter}
\end{equation}
preserves the normalized massless pole, crossing, planar meromorphy,
unsubtracted dispersion on the common domain, and positive nonzero spectral
density.  Yet
$a_{0,0}=\zeta(2)\mathbb E[\alpha^2]$ and
$a_{2,0}=\zeta(4)\mathbb E[\alpha^4]$ cannot equal a single-scale pair because
$\operatorname{Var}(\alpha^2)>0$.  Thus ordinary four-point axioms do not
imply a single sine kernel.

The main result is an exact finite-to-infinite rigidity mechanism: six-point
factorization generates a Fay equation that controls the complete analytic
four-point kernel, rather than a finite set of low-energy coefficients.
Positivity and moment determinacy then connect this kernel classification to
the Veneziano germ.  The two-scale mixture gives an equally sharp boundary:
ordinary four-point axioms cannot supply the missing module entrance.  The
analysis therefore separates two logically distinct tasks, kernel rigidity
and physical entrance, that must both be solved in a first-principles
derivation.

The theorem does not derive the right-KK--BCJ identity model from more
primitive axioms, prove module entrance for an arbitrary target, reconstruct
all higher-point string amplitudes, or produce a worldsheet.  It instead makes
the next problem precise: derive opposite-module membership from more
primitive higher-point principles, or construct a fully consistent
counterexample.  Closing this step would turn finite-multiplicity rigidity
into a route toward an S-matrix derivation of string scattering; failure would
establish a sharp boundary on what consistency alone can determine.  In either
case, the remaining obstruction is explicit and testable.

\section*{Data and code availability}
The ancillary package contains exact symbolic audits of the five-point
elimination, both six-point residues, both Gram determinants, ordered
saturation, Gamma cancellation, Fay conversion, branch selection, and
complex monodromy.  No experimental data are used.

\begin{acknowledgments}
OpenAI's ChatGPT was used for drafting and language editing and as an aid in
symbolic algebra, code development, and consistency checks.  The author
reviewed the arguments, calculations, code, and references and is responsible
for the contents of the manuscript.
\end{acknowledgments}

\enlargethispage{2\baselineskip}

\clearpage
\onecolumngrid
\setcounter{equation}{0}
\renewcommand{\theequation}{S\arabic{equation}}
\setcounter{table}{0}
\renewcommand{\thetable}{S\arabic{table}}
\setcounter{figure}{0}
\renewcommand{\thefigure}{S\arabic{figure}}

\begin{center}
{\large\bfseries Supplemental Material for \textit{String-Monodromy Rigidity from Six-Point Consistency}\par}
\vspace{0.8em}
Jianqi Sheng\par
Department of Physics, City University of Hong Kong, Hong Kong
\end{center}
\begin{quotation}
\noindent This Supplemental Material provides the technical details underlying the
results of the Letter. Appendix~A fixes the conventions and ordering
dictionary. Appendix~B derives the two exact six-point identities and records
their kinematics, residue tables, and Gram-determinant certificates.
Appendix~C gives the exact five-point right-KK--BCJ reduction. Appendix~D
derives the scalar pentagon identity from the first slice. Appendix~E derives
the odd Fay identity from the second slice, including the double-reversal and
ordered-saturation arguments. Appendix~F proves the cocycle and Fay
classification that reduces the analytic kernel to $q''=\kappa q$.
Appendix~G presents the low-energy and solution checks. Appendix~H specifies
the use of the Veneziano uniqueness theorem and Hausdorff moment determinacy.
Appendix~I audits the role of each premise. Appendix~J gives the detailed
convex-mixture counterexample. The final unnumbered section summarizes data
and code availability.
\end{quotation}

At four points, $u=-s-t$ and
\begin{equation}
\begin{gathered}
f_1(s,t)=m_4[1234|1234],\qquad
f_2(s,t)=m_4[1234|1243]=-\frac{F(s,t)}s,\\
f_1(s,t)=\frac{t}{u}f_2(s,t),\qquad
F(0,t)=1.
\end{gathered}
\label{eq:f-def}
\end{equation}
The remaining four-point right-BCJ relation is
\begin{equation}
F(s,t)=F(-s-t,t).
\label{eq:F-reflection}
\end{equation}
Simultaneous double reversal means
\begin{equation}
m_n[\alpha^T|\beta^T]=m_n[\alpha|\beta].
\label{eq:double-reversal}
\end{equation}
For a five-point ordering with adjacent invariants $x_i$, define
\begin{equation}
\begin{gathered}
G_1=m_5[12345|12345],\quad G_2=m_5[12345|12354],\\
g_3=m_5[12345|12453],\quad g_4=m_5[12345|12543].
\end{gathered}
\label{eq:five-def-main}
\end{equation}
The exact right-KK--BCJ reduction is
\begin{align}
g_3={}&\frac{x_3x_5G_1-(x_2+x_3)(x_2-x_4-x_5)G_2}
{(x_4-x_1-x_2)(x_2+x_3-x_5)},
\label{eq:g3-reduction}\\
g_4={}&\frac{x_5(x_3+x_4-x_1-x_2)G_1
-(x_3+x_4-x_1)(x_2-x_4-x_5)G_2}
{(x_4-x_1-x_2)(x_1+x_5-x_3)}.
\label{eq:g4-reduction}
\end{align}

The first six-point residue identity and its five-point slice are
\begin{equation}
0=\left[s_{34}f_2(s_{34},s_{24})f_2(s_{56},s_{15})
-s_{45}g_4[P_{12}3456]\right]_{
s_{14}=s_{26}=s_{46}=s_{234}=s_{12}=0},
\label{eq:six-first}
\end{equation}
\begin{equation}
x_1=x_3+x_4,\qquad
s_{AC}=-x_2-x_3,\qquad s_{BE}=-x_4-x_5,
\label{eq:first-slice}
\end{equation}
\begin{equation}
\left.g_4\right|_{x_1=x_3+x_4}
=f_2(x_3,-x_2-x_3)f_2(x_4,-x_4-x_5),
\label{eq:g4-slice}
\end{equation}
\begin{equation}
\begin{split}
\left.G_1\right|_{x_1=x_3+x_4}={}&
\frac{(x_2+x_3)(x_4+x_5)}{x_2x_5}\\
&\times f_2(x_3,-x_2-x_3)f_2(x_4,-x_4-x_5).
\end{split}
\label{eq:G1-slice}
\end{equation}
On the transverse intersection
\begin{equation}
x_1=x_3+x_4,\qquad x_5=x_2+x_3,
\label{eq:transverse}
\end{equation}
one obtains
\begin{equation}
F(a,-a-b)F(c,-a-b-c)=F(a,-a-b-c)F(b,-b-c),
\label{eq:F-pentagon}
\end{equation}
or, for $B(a,b)=F(a,-a-b)$,
\begin{equation}
B(a,b)B(a+b,c)=B(a,b+c)B(b,c).
\label{eq:intro-pentagon}
\end{equation}
Its normalized analytic solution is
\begin{equation}
F(s,t)=\frac{\gamma(s)\gamma(-s-t)}{\gamma(-t)},\qquad \gamma(0)=1.
\label{eq:F-gamma}
\end{equation}

For the second slice, set
\begin{equation}
a=s_{35},\quad b=s_{36},\quad c=s_{46},\quad d=s_{56},
\quad S=a+b+c+d,
\label{eq:abcd}
\end{equation}
and
\begin{equation}
X=(a,0,c,-b-c-d,c+d),\quad
X_R=(c+d,-S,0,d,-b-c-d).
\label{eq:X-XR}
\end{equation}
The second exact residue identity is
\begin{equation}
\begin{split}
0=\big[&s_{23}f_2(s_{23},s_{24})f_1(s_{16},s_{15})
-s_{26}g_3[6123P_{45}]\\
&+f_1(s_{123},s_{36})-s_{46}g_3[P_{12}3456]\big]_{
s_{25}=s_{14}=s_{234}=s_{12}=s_{45}=0}.
\end{split}
\label{eq:six-second}
\end{equation}
For $X=(A,0,C,D,E)$ define
\begin{equation}
\tau(A,C,D,E)=(E-C,D-A,E,D),\qquad \tau^2=1.
\label{eq:tau}
\end{equation}
Then
\begin{equation}
(A-D)h(X)-Ch(\tau X)=R(X),\qquad h(A,C,D,E)=g_3(A,0,C,D,E),
\label{eq:transport}
\end{equation}
where
\begin{equation}
\begin{split}
R(X)={}&\frac{C(A+E-C)}{E-C}f_2(C,A-D-C)
f_2(A,-A-E+C)\\
&+\frac{D+E}{D}f_2(E,-D-E).
\end{split}
\label{eq:R-def}
\end{equation}
At the ordered saturated endpoint $D+E=0$,
\begin{equation}
\begin{split}
h(X)&=\frac{D}{A}f_2(D,-C-D)f_2(-D,D-A),\\
h(\tau X)&=\frac{D}{C+D}f_2(-D,A)f_2(D,C).
\end{split}
\label{eq:h-saturated}
\end{equation}
The resulting four-point identity is
\begin{equation}
\begin{split}
&\frac{D(A-D)}{A}f_2(D,-C-D)f_2(-D,D-A)
-\frac{CD}{C+D}f_2(-D,A)f_2(D,C)\\
&\qquad=-\frac{C(A-C-D)}{C+D}
f_2(C,A-C-D)f_2(A,-A+C+D).
\end{split}
\label{eq:pure-four}
\end{equation}
With $q(x)=x/[\gamma(x)\gamma(-x)]$, it is equivalent to odd Fay,
\begin{equation}
q(x-y)q(z)+q(y-z)q(x)+q(z-x)q(y)=0.
\label{eq:intro-fay}
\end{equation}
Differentiation gives the subtraction law
\begin{equation}
q(z-x)=q'(x)q(z)-q'(z)q(x),
\label{eq:subtraction}
\end{equation}
and $q''=\kappa q$.  The real normalized branches and left-null ratio are
\begin{equation}
q(x)=x,\quad \frac{\sin(\lambda x)}{\lambda},\quad
\frac{\sinh(\lambda x)}{\lambda}.
\label{eq:q-branches}
\end{equation}
The corresponding left-null ratio is
\begin{equation}
\frac{f_2(u,t)}{f_2(t,u)}=\frac{q(t)}{q(u)}.
\label{eq:left-ratio}
\end{equation}

Under the additional target assumptions stated in the Letter, the unique
positive single-scale germ is
\begin{equation}
A_{\mathrm V}(\alpha';s,t)=-\alpha'^2
\frac{\Gamma(-\alpha's)\Gamma(-\alpha't)}
{\Gamma(1-\alpha'(s+t))}.
\label{eq:veneziano}
\end{equation}
The indispensable entrance assumption is exposed by
\begin{proposition}[Different-scale convex mixture]
For $0<w<1$ and distinct positive $\alpha_1,\alpha_2$,
\begin{equation}
A=wA_{\mathrm V}(\alpha_1)+(1-w)A_{\mathrm V}(\alpha_2)
\label{eq:mixture}
\end{equation}
retains the ordinary four-point axioms listed in the Letter but cannot obey a
single-scale sine monodromy relation.
\label{prop:mixture}
\end{proposition}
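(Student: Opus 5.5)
The plan has two halves: verify that each four-point axiom listed in the Letter is linear in the amplitude or a convex-cone condition, so it survives mixing; then show that the low-energy coefficients of the mixture are incompatible with any single $\lambda$ in Eq.~\eqref{eq:target-monodromy-letter}.

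For the first half I go through the axioms one at a time for each $A_{\mathrm V}(\alpha_i)$.

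\begin{itemize}
\item \emph{Massless pole.} Near $s=0$, $A_{\mathrm V}(\alpha';s,t)\to -1/s\cdot(\ldots)$ with residue independent of $\alpha'$ at leading order (normalized to give $s^{-2}$ forward). The weights satisfy $w+(1-w)=1$, so the mixture keeps the same normalized massless pole.
\item \emph{Crossing and planar meromorphy.} Crossing $s\leftrightarrow t$ is linear. The only poles are at $\alpha_i s\in\mathbb Z_{\ge0}$ and $\alpha_i t\in\mathbb Z_{\ge0}$, hence planar.
\item \emph{Unsubtracted dispersion.} For fixed $t$ in the common domain $t<0$, each term decays in $|s|$ away from the real axis (Regge behaviour $s^{\alpha_i t}$). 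The fixed-$t$ dispersion relation is therefore linear and holds for the sum.
\item \emph{Spectral positivity.} The residue at each heavy pole expands in Gegenbauer partial waves with nonnegative coefficients in the known positive range, as cited. Positive weights preserve positivity. The poles of the two scales differ, so the mixture has nonzero heavy weight.
\end{itemize}

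For the second half I expand the forward limit. Using $A_{\mathrm V}(\alpha;s,-s)=s^{-2}+\zeta(2)\alpha^2+\zeta(4)\alpha^4 s^2+\dots$, read off from the Gamma-function expansion $\Gamma(1-x)\Gamma(1+x)=\pi x/\sin\pi x$, the mixture gives
\[
a_{0,0}=\zeta(2)\,\mathbb E[\alpha^2],\qquad a_{2,0}=\zeta(4)\,\mathbb E[\alpha^4],
\]
with $\mathbb E$ taken against the two-point measure $(w,1-w)$.

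Now suppose the mixture obeyed Eq.~\eqref{eq:target-monodromy-letter} for some $\lambda$. Positivity and nonzero heavy weight select the trigonometric branch, and the Wan--Zhou/Hausdorff determinacy statement quoted in the Letter then forces the germ to equal $A_{\mathrm V}(\alpha)$ with $\alpha=\lambda/\pi$. So there would be a single $\alpha$ with
\[
\zeta(2)\,\alpha^2=\zeta(2)\,\mathbb E[\alpha^2],\qquad \zeta(4)\,\alpha^4=\zeta(4)\,\mathbb E[\alpha^4].
\]
Together these give $\mathbb E[\alpha^4]=(\mathbb E[\alpha^2])^2$, i.e.\ $\operatorname{Var}(\alpha^2)=0$. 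But $\alpha_1\neq\alpha_2$, both are positive, and $0<w<1$, so $\operatorname{Var}(\alpha^2)=w(1-w)(\alpha_1^2-\alpha_2^2)^2>0$, a contradiction.

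To avoid relying on the full uniqueness theorem, I will instead derive the two coefficient relations directly from the monodromy relation. Expanding $\sin(\lambda s)A(s,u)=\sin(\lambda t)A(t,u)$ to subleading orders fixes $a_{0,0}=\lambda^2/6$ (consistent with $a_{0,0}=-\kappa/6$) and $a_{2,0}$ as a definite function of $\lambda$, namely $\lambda^4\zeta(4)/\pi^4$. Comparing with the mixture's values reproduces the same variance contradiction.

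The main obstacle is the positivity check: confirming that the partial-wave coefficients of each $A_{\mathrm V}(\alpha_i)$ are nonnegative in the spacetime dimension the Letter assumes. I would cite the residue-positivity results rather than reprove them, since only linearity is new here.
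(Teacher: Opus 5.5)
Your proposal is correct, and on the axioms it matches the paper. Crossing, planar meromorphy, the unit massless residue and the fixed-$t$ Cauchy integral are linear, and the spectral measure $w\,d\mu_{\alpha_1}+(1-w)\,d\mu_{\alpha_2}$ is nonnegative and nonzero. That positivity, not distinct pole positions, is why the heavy weight cannot cancel; the two pole sets overlap whenever $\alpha_1/\alpha_2$ is rational.

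The obstruction is also the paper's: $\operatorname{Var}(\alpha^2)=w(1-w)(\alpha_1^2-\alpha_2^2)^2>0$, detected by $(a_{0,0},a_{2,0})$. The difference lies in how a single sine kernel is tied to single-scale coefficients. The paper leaves this implicit in the phrase ``equality to a delta measure at $\alpha_{\rm eff}$''. It can be supplied by the conditional Veneziano corollary, which is your first route.

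Your second route is more elementary and stronger, and it works. Insert the crossing-symmetric germ $-1/(st)+\sum a_{p,q}s^{p-q}t^q$ into $\sin(\lambda s)A(s,u)=\sin(\lambda t)A(t,u)$ and expand order by order:
\begin{itemize}
\item First order gives $a_{0,0}=\lambda^2/6$.
\item Second order is empty, so $a_{1,0}$ stays free; this is the monodromy-invisible symmetric sector.
\item Third order gives $2a_{2,0}-a_{2,1}=7\lambda^4/360$ and $3a_{2,0}-2a_{2,1}=\lambda^4/36$, hence $a_{2,0}=\lambda^4/90=\tfrac25 a_{0,0}^2$ and $a_{2,1}=\lambda^4/360$.
\end{itemize}
The relation $a_{2,0}=\tfrac25a_{0,0}^2$ holds for every $\kappa=-\lambda^2$. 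The mixture instead has $a_{2,0}-\tfrac25 a_{0,0}^2=\zeta(4)\operatorname{Var}(\alpha^2)>0$. So you exclude the trigonometric, hyperbolic and linear branches at once, without positivity, Hausdorff determinacy, or the Wan--Zhou theorem.

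One slip needs fixing. The forward limit is $A_{\mathrm V}(\alpha;s,-s)=\pi\alpha/[s\sin(\pi\alpha s)]$, and its $s^2$ coefficient is $\tfrac74\zeta(4)\alpha^4=a_{2,0}-a_{2,1}+a_{2,2}$, not $\zeta(4)\alpha^4$. The value $a_{2,0}=\zeta(4)\alpha^4$ must therefore be read from the two-variable expansion, not the forward series. Only the $\alpha^4$ scaling enters the variance argument, so the conclusion is unaffected.
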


\appendix

\section{Appendix A: Conventions and ordering dictionary}
\label{app:conventions}

All external momenta are outgoing, massless, and obey
$\sum_i p_i=0$.  We use $s_{ij}=(p_i+p_j)^2$ and
$s_{i_1\cdots i_k}=(p_{i_1}+\cdots+p_{i_k})^2$.  At four points
$s=s_{12}$, $t=s_{13}$, and $u=s_{14}=-s-t$ in the convention of
Eq.~\eqref{eq:f-def}; only algebraic relations are used, so a different
channel naming amounts to relabeling.

For a five-letter ordered alphabet $(A,B,C,D,E)$, define adjacent variables
\begin{equation}
\begin{aligned}
 x_1&=s_{AB},&x_2&=s_{BC},&x_3&=s_{CD},\\
 x_4&=s_{DE},&x_5&=s_{EA}.&&
\end{aligned}
 \label{eq:adjacent-five}
\end{equation}
The four functions used in the proof are
\begin{equation}
\begin{aligned}
G_1&=m_5[ABCDE|ABCDE],\\
G_2&=m_5[ABCDE|ABCED],\\
g_3&=m_5[ABCDE|ABDEC],\\
g_4&=m_5[ABCDE|ABEDC].
\end{aligned}
\label{eq:five-dictionary}
\end{equation}
For $(A,B,C,D,E)=(1,2,3,4,5)$ this reproduces
\eqref{eq:five-def-main}.  The complete eight-function dictionary used in
the original KLT bootstrap is given in Sec. 6 of Ref.~\cite{Chi:2021mio};
only the four entries above occur here.

Cyclicity rotates both traces and the particle labels.  In particular,
$G_1(x_1,x_2,x_3,x_4,x_5)=G_1(x_5,x_1,x_2,x_3,x_4)$.  For $g_3$, reversing
both orderings and rotating to a common first letter gives
\begin{equation}
\begin{aligned}
 &g_3(x_1,x_2,x_3,x_4,x_5)\\
 &\qquad=g_3(x_4,x_3,x_2,x_1,x_5).
\end{aligned}
 \label{eq:g3-double-reversal}
\end{equation}
To see the sign, take Hermitian generators.  Complex conjugation reverses
each trace.  The two reversals produce the simultaneous operation
\eqref{eq:double-reversal}, with no assumption about reversing only one
trace.

\section{Appendix B: Origin and kinematics of the two exact six-point identities}
\label{app:six-origin}

The fundamental BCJ relation used in Ref.~\cite{Chen:2022shl} is imposed on
the right ordering for every fixed left ordering.  Put
$\mathbb I=123456$.  The two relabelings used there give the exact
six-point combinations
\begin{equation}
\begin{aligned}
0={}&s_{34}s_{234}m_6[\mathbb I|134265]\\
&+s_{46}(s_{12}+s_{25})m_6[\mathbb I|136425]
 +s_{46}s_{12}m_6[\mathbb I|136452]\\
&+(s_{46}+s_{45})s_{12}m_6[\mathbb I|136542]\\
&+(s_{14}+s_{45})(s_{234}+s_{26})m_6[\mathbb I|136245]\\
&+s_{14}(s_{234}+s_{26})m_6[\mathbb I|136254]\\
&+s_{14}(s_{234}+s_{26}+s_{25})m_6[\mathbb I|136524],
\end{aligned}
\label{eq:parent-BCJ-one}
\end{equation}
and
\begin{equation}
\begin{aligned}
0={}&s_{23}s_{234}m_6[\mathbb I|132456]\\
&+s_{25}(s_{14}+s_{46})m_6[\mathbb I|135246]
 +s_{25}s_{14}m_6[\mathbb I|135264]\\
&+(s_{25}+s_{26})s_{14}m_6[\mathbb I|135624]\\
&+(s_{12}+s_{26})(s_{234}+s_{45})m_6[\mathbb I|135426]\\
&+s_{12}(s_{234}+s_{45})m_6[\mathbb I|135462]\\
&+s_{12}(s_{234}+s_{45}+s_{46})m_6[\mathbb I|135642].
\end{aligned}
\label{eq:parent-BCJ-two}
\end{equation}
We use the canonical ordered factorization convention
\begin{equation}
 \mathop{\rm Res}_{s_I=0}m_n[\alpha|\beta]
 =m_L[\alpha_L|\beta_L]m_R[\alpha_R|\beta_R],
 \label{eq:ordered-factorization}
\end{equation}
where the internal leg closes the contiguous block $I$ in each cyclic
ordering.  Our cubic normalization is
$m_3[123|123]=1$ and $m_3[123|132]=-1$.  Thus a reversed cubic block
contributes one minus sign.  Two channels can occur in the same tree only
when their induced splits are compatible; equivalently, one of the four
intersections of the two bipartitions is empty.

These are Eq. (4.27) of Ref.~\cite{Bern:2008qj} after choosing the left
orderings $154263$ and $142536$, respectively, and applying in each case the
relabeling that sends that ordering to $\mathbb I$, as recorded in
Ref.~\cite{Chen:2022shl}.  In the first,
take the regular limits $s_{14},s_{26},s_{46}\to0$, followed in order by
$s_{234}\to0$ and $s_{12}\to0$.  In the second, first take the regular
limits $s_{25},s_{14}\to0$, followed in order by
$s_{234}\to0$, $s_{12}\to0$, and $s_{45}\to0$.  The ordered prescription
removes any ambiguity from coefficients such as $s_{234}+s_{45}$.

Here is the term-by-term projection.  The entries refer to the right
orderings in \eqref{eq:parent-BCJ-one}; the prefactors in that equation are
included in the images:
\begin{equation}
\begin{array}{ccl}
134265&\longmapsto&
 s_{34}f_2(s_{34},s_{24})f_2(s_{56},s_{15}),\\
136425&\longmapsto&0,\\
136452&\longmapsto&0,\\
136542&\longmapsto&-s_{45}g_4[P_{12}3456],\\
136245&\longmapsto&0,\\
136254&\longmapsto&0,\\
136524&\longmapsto&0.
\end{array}
\label{eq:first-residue-table}
\end{equation}
The second and third entries vanish with the regular $s_{46}$ limit, and
the last two with the regular $s_{14}$ limit.  After those limits, the fifth
entry is regular in $s_{234}$ and is killed by its explicit $s_{234}$
factor.  The first and fourth entries factorize on $s_{234}$ and $s_{12}$,
respectively.  Reversal of one ordering at the cubic factor in the fourth
entry supplies its minus sign.  Equation \eqref{eq:first-residue-table}
is exactly \eqref{eq:six-first}.

For \eqref{eq:parent-BCJ-two}, the corresponding table is
\begin{equation}
\begin{array}{ccl}
132456&\longmapsto&
 s_{23}f_2(s_{23},s_{24})f_1(s_{16},s_{15}),\\
135246&\longmapsto&0,\\
135264&\longmapsto&0,\\
135624&\longmapsto&0,\\
135426&\longmapsto&-s_{26}g_3[6123P_{45}],\\
135462&\longmapsto&f_1(s_{123},s_{36}),\\
135642&\longmapsto&-s_{46}g_3[P_{12}3456].
\end{array}
\label{eq:second-residue-table}
\end{equation}
The middle $f_1$ term is the compatible double residue in the disjoint
channels $s_{12}$ and $s_{45}$.  The first six-point amplitude also has a
common-planar $s_{45}$ channel before projection, but $s_{45}$ overlaps
$s_{234}$ without nesting.  No cubic tree can carry both channels, so the
$s_{234}$ residue is regular when the later $s_{45}$ endpoint is taken.
The remaining zero entries carry an explicit regular-limit factor.  The
two minus signs again come from a reversed cubic ordering.  This proves
\eqref{eq:six-second} directly from \eqref{eq:parent-BCJ-two}.

Thus the two equations used in the proof are exact factorization
consequences of displayed six-point BCJ relations, not truncations of the
derivative expansion.  The subsequent all-order elimination is the new
step.

For completeness, solve momentum conservation on the first slice
\begin{equation}
 s_{14}=s_{26}=s_{46}=s_{234}=s_{12}=0.
 \label{eq:first-constraints}
\end{equation}
Taking $b=s_{15}$, $c=s_{16}$, $e=s_{24}$, and $f=s_{34}$ as independent,
the remaining two-particle invariants needed here are
\begin{equation}
\begin{array}{c|rrrrrrr}
ij&13&23&25&35&36&45&56\\ \hline
s_{ij}&-b-c&-e-f&f&c+e&b&-e-f&-b-c.
\end{array}
\label{eq:first-table}
\end{equation}
For $(A,B,C,D,E)=(P_{12},3,4,5,6)$, this table gives
\eqref{eq:first-slice}.  In Eq.~\eqref{eq:six-first}, the right four-point
relation moves the prefactor $s_{34}$ between the two allowed orderings,
giving \eqref{eq:g4-slice} without an exceptional division.

On the second slice,
\begin{equation}
 s_{25}=s_{14}=s_{234}=s_{12}=s_{45}=0,
 \label{eq:second-constraints}
\end{equation}
the variables \eqref{eq:abcd} give
\begin{equation}
\begin{array}{c|rrrrrrr}
ij&13&15&16&23&24&26&34\\ \hline
s_{ij}&d&-a-d&a&c&a+b+d&-S&-S.
\end{array}
\label{eq:second-table}
\end{equation}
Reading the adjacent channels of $g_3[6123P_{45}]$ and
$g_3[P_{12}3456]$ from this table gives \eqref{eq:X-XR}.

The two slices are realized by genuine complex massless momenta rather than
formal Mandelstam assignments.  We use the following elementary Gram
criterion.  Let $S=(s_{ij})$ be symmetric, with zero diagonal and zero row
sums.  If a $5\times5$ principal minor is nonzero, then over $\mathbb C$
there are five-dimensional vectors $p_1,\ldots,p_5$ with
$2p_i\mathbin{\cdot}p_j=s_{ij}$.  Setting
$p_6=-\sum_{i=1}^5p_i$ gives $p_i^2=0$ for every $i$ and reproduces the
sixth row of $S$.  This proves momentum conservation and realizes the full
Mandelstam matrix in complex $D=5$.

On the transverse intersection \eqref{eq:transverse}, the first table has
$e=-c$.  The determinant of its $(1,\ldots,5)$ principal Mandelstam matrix
is
\begin{equation}
 2c^3(b+f)(b+c-f),
 \label{eq:first-gram-determinant}
\end{equation}
which is nonzero on a Zariski-open set.  The three pentagon variables are a
nonsingular linear recombination of $(b,c,f)$.  At the second saturated
endpoint $D+E=0$, equivalently $b=0$ in \eqref{eq:abcd}, the corresponding
determinant is
\begin{equation}
 -2D(A+C)(A-C-D)^3.
 \label{eq:second-gram-determinant}
\end{equation}
It is again nonzero on a Zariski-open set, and $(A,C,D)$ are independent.
Equations \eqref{eq:first-gram-determinant} and
\eqref{eq:second-gram-determinant} therefore exclude a hidden Gram
degeneration in both functional identities.

\section{Appendix C: Exact five-point right-KK--BCJ reduction}
\label{app:five-reduction}

For fixed left ordering write $A(\beta)=m_5[12345|\beta]$.  We use the
standard KK shuffle relation
\begin{equation}
 A(1,\alpha,5,\beta)
 =(-1)^{|\beta|}\sum_{\sigma\in\alpha\mathbin{\amalg}\beta^T}
 A(1,\sigma,5)
 \label{eq:KK-shuffle}
\end{equation}
and the fundamental BCJ relation
\begin{equation}
 \sum_{i=2}^{n-1}
 \left(s_{21}+\sum_{j=3}^{i}s_{2j}\right)
 A(1,3,\ldots,i,2,i+1,\ldots,n)=0.
 \label{eq:fundamental-BCJ}
\end{equation}
To make the reduction explicit, write
$A_{ijk}=A(1,i,j,k,5)$ for a permutation $(i,j,k)$ of $(2,3,4)$.  For each
moving label $r\in\{2,3,4\}$ and each ordering $(i,j)$ of the other two,
the relabeled fundamental relation is
\begin{equation}
 s_{r1}A_{rij}+(s_{r1}+s_{ri})A_{irj}
 +(s_{r1}+s_{ri}+s_{rj})A_{ijr}=0.
 \label{eq:five-six-bcj}
\end{equation}
These six equations have generic rank four.  Momentum conservation gives
\begin{equation}
\begin{gathered}
s_{13}=x_4-x_1-x_2,\quad s_{24}=x_5-x_2-x_3,\quad
s_{35}=x_1-x_3-x_4,\\
s_{14}=x_2-x_4-x_5,\quad s_{25}=x_3-x_5-x_1.
\end{gathered}
\label{eq:five-nonadjacent}
\end{equation}
The required KK shuffles are
\begin{equation}
\begin{aligned}
G_1&=A_{234},&
G_2&=-A_{234}-A_{243}-A_{423},\\
g_3&=-A_{243}-A_{234}-A_{324},&
g_4&= A_{234}+A_{324}+A_{342}.
\end{aligned}
\label{eq:five-kk-explicit}
\end{equation}
Thus \eqref{eq:five-six-bcj} and
\eqref{eq:five-kk-explicit} form a closed six-amplitude system.  Taking
$(G_1,G_2)$ as a generic basis, ordinary Gaussian elimination gives
\begin{equation}
\begin{pmatrix}g_3\\g_4\end{pmatrix}
=\begin{pmatrix}
\dfrac{x_3x_5}{(x_4-x_1-x_2)(x_2+x_3-x_5)}&
-\dfrac{(x_2+x_3)(x_2-x_4-x_5)}{(x_4-x_1-x_2)(x_2+x_3-x_5)}\\[4mm]
\dfrac{x_5(x_3+x_4-x_1-x_2)}{(x_4-x_1-x_2)(x_1+x_5-x_3)}&
-\dfrac{(x_3+x_4-x_1)(x_2-x_4-x_5)}{(x_4-x_1-x_2)(x_1+x_5-x_3)}
\end{pmatrix}
\begin{pmatrix}G_1\\G_2\end{pmatrix}.
\label{eq:five-matrix}
\end{equation}
This is \eqref{eq:g3-reduction}--\eqref{eq:g4-reduction}.  Apparent
denominators in this basis change are harmless: the identity is first valid
where the two chosen basis amplitudes are independent, and clearing the
denominators gives a polynomial identity which extends meromorphically.

As a sign and normalization check, cubic biadjoint scalar amplitudes give
\begin{equation}
 g_3=-\frac{1}{x_1x_4},\qquad
 g_4=\frac{1}{x_1x_4}+\frac{1}{x_1x_3},
 \label{eq:BAS-g3g4}
\end{equation}
together with the standard planar sums for $G_1$ and $G_2$.  Direct
substitution into \eqref{eq:five-matrix} reproduces
\eqref{eq:BAS-g3g4} exactly.

\section{Appendix D: First slice and proof of the scalar pentagon}
\label{app:first-slice}

Insert $x_1=x_3+x_4$ in the second row of
\eqref{eq:five-matrix}.  Its $G_2$ numerator contains
$x_3+x_4-x_1$ and vanishes.  Solving the remaining term for $G_1$ and using
\eqref{eq:g4-slice} gives \eqref{eq:G1-slice}.  Under a cyclic shift,
\begin{equation}
 (x_1,x_2,x_3,x_4,x_5)\mapsto(x_5,x_1,x_2,x_3,x_4),
 \label{eq:cyclic-x}
\end{equation}
the corresponding hyperplane is $x_5=x_2+x_3$.  At the generic intersection
\eqref{eq:transverse}, both formulas evaluate the same cyclic meromorphic
function $G_1$.

Set $a=x_2$, $b=x_3$, and $c=x_4$.  Substituting
$f_2=-F/s$ in the two restrictions produces the same nonzero rational
prefactor on a Zariski-open subset.  After its cancellation, the equality is
\eqref{eq:F-pentagon}.  No division is asserted at its zero set: after
clearing denominators the result extends by the identity theorem.  Reflection
\eqref{eq:F-reflection} gives
\begin{equation}
 B(a,b)=F(a,-a-b)=F(b,-a-b)=B(b,a),
\end{equation}
and \eqref{eq:F-pentagon} becomes \eqref{eq:intro-pentagon}.  This proves
that the scalar pentagon is a consequence of six-point consistency rather
than an additional associativity postulate.

\section{Appendix E: Second slice, double reversal, and the saturated limit}
\label{app:second-slice}

Use the table \eqref{eq:second-table} to express the two $g_3$ arguments as
\eqref{eq:X-XR}.  In terms of $(A,C,D,E)$, the first is
\begin{equation}
 X=(A,0,C,D,E),
\end{equation}
while applying \eqref{eq:g3-double-reversal} to the second gives
$\tau X$ with \eqref{eq:tau}.  Direct substitution verifies $\tau^2=1$.
Equation \eqref{eq:transport} follows after replacing every $f_1$ by
$t f_2/(-s-t)$ and collecting the two $g_3$ coefficients.  Applying $\tau$
to the same equation and eliminating the two $h$ values yields only
$R(X)=R(\tau X)$; substitution of \eqref{eq:R-def} reduces this equality to
\eqref{eq:F-reflection}.  Thus involutive transport alone is not the source
of Fay.

\begin{lemma}[Ordered saturated restriction]
Let $H=x_2-x_4-x_5$ and
$\Delta=(x_4-x_1-x_2)(x_2+x_3-x_5)$.  Restrict
\eqref{eq:g3-reduction} first to $H=0$ on the dense open set where
$\Delta\neq0$.  The $G_2$ term vanishes in the function field of $H=0$.
The resulting identity extends to a generic point of $H=x_2=0$ whenever
$g_3$ has no physical pole in the $x_2$ channel.
\label{lem:ordered-saturation}
\end{lemma}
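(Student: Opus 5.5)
The plan is to work entirely with the cleared-denominator form of \eqref{eq:g3-reduction}, namely the identity
\begin{equation}
\Delta\, g_3 = x_3x_5\,G_1 - (x_2+x_3)\,H\,G_2,\qquad H=x_2-x_4-x_5,
\label{eq:plan-cleared}
\end{equation}
which holds as an identity of meromorphic functions on generic complex kinematics. Appendix~C obtains it from Gaussian elimination of \eqref{eq:five-six-bcj} and \eqref{eq:five-kk-explicit}, and there it is already valid after clearing denominators. We then restrict to the hypersurface $H=0$. By the simple-pole and common-planar hypotheses, $g_3$, $G_1$ and $G_2$ have singularities only along the planar hyperplanes $x_i=0$. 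The hypersurface $H=0$ is not one of these, so all three functions restrict to well-defined meromorphic functions on $H=0$; formally, they lie in the function field of $\{H=0\}$. Their only polar divisors on $H=0$ are the traces of the $x_i=0$ planes.

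On the restriction, the coefficient $(x_2+x_3)H$ is identically zero. The product $(x_2+x_3)H\,G_2$ therefore vanishes in that function field, because $G_2|_{H=0}$ is a genuine (not identically infinite) meromorphic function and so carries no compensating pole along $H$. On the dense open subset of $H=0$ where $\Delta\neq0$, we may then divide and obtain
\[
g_3|_{H=0} = \frac{x_3x_5\,G_1|_{H=0}}{\Delta|_{H=0}}.
\]
This establishes the first two claims of the lemma.

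For the extension, parametrize $H=0$ by $(x_1,x_2,x_3,x_4)$, with $x_5=x_2-x_4$, and regard both sides as meromorphic functions of these variables. The left side $g_3|_{H=0}$ has possible polar divisors only in the adjacent channels of the ordering. The hypothesis that $g_3$ has no $x_2$-channel pole means $g_3|_{H=0}$ is holomorphic near a generic point of $x_2=0$, that is, away from the other divisors $x_1,x_3,x_4,x_5=0$ and from $\Delta=0$. At $x_2=0$ the denominator becomes $\Delta=(x_4-x_1)(x_3-x_5)$, which is generically nonzero on $\{H=x_2=0\}$ (where $x_5=-x_4$). The restricted identity is an equality of meromorphic functions on the irreducible hypersurface $H=0$, so by the identity theorem it persists on the whole hypersurface. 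Taking the limit $x_2\to0$ along $H=0$ then evaluates both sides at a generic point of $H=x_2=0$. This yields the saturated formula used in \eqref{eq:h-saturated}, and it also implies that the right-hand side has no $x_2$ pole there either.

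The main obstacle is the ordering of the limits. If one instead first sets $x_2=0$ and only afterwards imposes $H=0$, the term $H\,G_2$ becomes an indeterminate $0\cdot\infty$ whenever $G_2$ carries an $x_2$ pole; this happens because $x_2=s_{23}$ is planar in the ordering $12354$. The key point to check carefully is that on $H=0$ the factor $H$ vanishes \emph{identically} before any limit in $x_2$ is taken. To control this, I would verify that $\{H=0\}\cap\{x_2=0\}$ has codimension two and lies in the closure of $\{H=0,\ x_2\neq0\}$. I would also verify that $G_2|_{H=0}$ has at most a simple pole along $x_2=0$, which follows from the simple-pole hypothesis. Together these make the vanishing product $0\cdot G_2$ on the open set persist to the limit, and they show that no residual finite contribution of the form $H/x_2$ survives.
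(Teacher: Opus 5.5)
Your proposal is correct and follows essentially the same route as the paper. You restrict the cleared identity to $H=0$, where the $G_2$ coefficient vanishes identically because it carries $H$ as an exact factor and $\Delta|_{H=0}=(-x_1-x_5)(x_3+x_4)\neq0$; you then extend to a generic point of $H=x_2=0$ by the identity theorem, using that $g_3$ has no $x_2$ pole. Your extra check that $G_2|_{H=0}$ has at most a simple pole at $x_2=0$ is harmless but unnecessary, since $0\cdot G_2|_{H=0}$ is the zero meromorphic function whatever the pole order.
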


\begin{proof}
In the coordinate ring
$\mathbb C[x_1,\ldots,x_5]/(H)$, the image of $\Delta$ is
$(-x_1-x_5)(x_3+x_4)$ and is not the zero polynomial.  We may therefore
restrict the rational identity after localizing at this image.  The
$G_2$ coefficient contains $H$ as an exact factor, so its image is zero
before $x_2$ is specialized.  This is an identity on a dense open subset of
$H=0$.  If the two orderings of $g_3$ do not share the $x_2$ channel,
common-planar locality excludes an $x_2^{-1}$ pole.  Equality of
meromorphic functions then extends the restricted identity to the generic
endpoint $H=x_2=0$.  No limit of the form $0\cdot\infty$ is taken.
\end{proof}

Now impose $D+E=0$.  To evaluate the first $h$, do not immediately put the
second adjacent invariant to zero.  In the first row of
\eqref{eq:five-matrix} take
\begin{equation}
 x_2=\eps,\qquad \eps=x_4+x_5.
 \label{eq:saturation-path}
\end{equation}
The two regulated adjacent vectors are
\begin{equation}
\begin{aligned}
X_\eps&=(A,\eps,C,D,\eps-D),\\
(\tau X)_\eps&=(-C-D,\eps,D-A,-D,D+\eps).
\end{aligned}
\label{eq:regulated-X-pair}
\end{equation}
The $G_2$ coefficient is then exactly proportional to
$x_2-x_4-x_5=0$ before the limit.  A cyclic image of
\eqref{eq:G1-slice} evaluates $G_1$.  On the regulated paths,
\eqref{eq:g3-reduction} reduces to
\begin{equation}
\begin{aligned}
g_3(X_\eps)
 &=\frac{C(\eps-D)}{(D-A-\eps)(C+D)}G_1(X_\eps),\\
g_3((\tau X)_\eps)
 &=-\frac{(D-A)(D+\eps)}{A(C-\eps)}G_1((\tau X)_\eps).
\end{aligned}
\label{eq:regulated-g3-coefficients}
\end{equation}
For $X=(A,0,C,D,-D)$, rotate the endpoint vector to
\begin{equation}
 (0,C,D,-D,A),
 \label{eq:first-saturated-rotation}
\end{equation}
whose first entry equals the sum of its third and fourth.  This gives
\begin{equation}
 G_1=\frac{(C+D)(A-D)}{CA}
 f_2(D,-C-D)f_2(-D,D-A).
 \label{eq:first-saturated-G1}
\end{equation}
For $\tau X=(-C-D,0,D-A,-D,D)$, the corresponding rotation is
$(0,D-A,-D,D,-C-D)$ and gives
\begin{equation}
 G_1=\frac{AC}{(A-D)(C+D)}f_2(-D,A)f_2(D,C).
 \label{eq:second-saturated-G1}
\end{equation}
Substitution in the first row of \eqref{eq:five-matrix} yields exactly
\eqref{eq:h-saturated}.  Only afterwards is $\eps\to0$ taken.  The two
orderings defining $g_3$ do not share the
$x_2$ planar channel, so no physical $1/\eps$ pole exists.  Meromorphic
continuation therefore justifies the endpoint.

Substituting these values into \eqref{eq:transport} removes every
five-point contact and gives \eqref{eq:pure-four}.  For the cubic BAS check,
$F=1$, $\gamma=1$, and $q(x)=x$; both saturated values reduce to
$g_3=-1/(x_1x_4)$, and \eqref{eq:pure-four} is an exact rational identity.

\section{Appendix F: Cocycle and Fay classification}
\label{app:cocycle-fay}

\begin{lemma}[Normalized analytic scalar cocycle]
Let $B$ be nonzero and analytic near $(0,0)$, with
$B(x,0)=B(0,x)=1$, and suppose it obeys
\eqref{eq:intro-pentagon}.  Then locally
$B(x,y)=\gamma(x)\gamma(y)/\gamma(x+y)$ for an analytic nonzero
$\gamma$ with $\gamma(0)=1$.  The freedom
$\gamma(x)\mapsto e^{cx}\gamma(x)$ does not change $B$.
\end{lemma}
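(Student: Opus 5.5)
Since $B$ is analytic near $(0,0)$ with $B(0,0)=1$, it is nonvanishing on a small polydisc, so I would fix the branch of $\phi=\log B$ with $\phi(0,0)=0$. On that polydisc $\phi$ is analytic and satisfies $\phi(x,0)=\phi(0,x)=0$. Taking the logarithm of \eqref{eq:intro-pentagon} then gives the additive cocycle equation
\begin{equation}
\phi(a,b)+\phi(a+b,c)=\phi(a,b+c)+\phi(b,c).
\end{equation}
This holds for $a,b,c$ small enough that all arguments lie in the polydisc. Continuity and agreement at the origin guarantee that both sides use the same branch.

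The main step is to differentiate in $c$ at $c=0$. Set $\psi(x)=\partial_2\phi(x,0)$, which is analytic with $\psi(0)=\partial_2\phi(0,0)=0$ because $\phi(0,\cdot)\equiv0$. Differentiating the cocycle equation yields
\begin{equation}
\psi(a+b)=\partial_b\phi(a,b)+\psi(b).
\end{equation}
Next let $\Psi$ be the primitive of $\psi$ with $\Psi(0)=0$, and integrate in $b$ from $0$ using $\phi(a,0)=0$. This gives
\begin{equation}
\phi(a,b)=\Psi(a+b)-\Psi(a)-\Psi(b),
\end{equation}
where the $\Psi(0)$ term vanishes. Define $\gamma=e^{-\Psi}$. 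Then $\gamma$ is analytic and nonzero, $\gamma(0)=1$, and exponentiating gives $B(a,b)=\gamma(a)\gamma(b)/\gamma(a+b)$.

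Two consistency checks remain, and both are immediate. Symmetry of $B$ is automatic from this form, so it is not needed as an input. The cocycle equation is satisfied identically by any such coboundary, so no further constraint arises on $\gamma$.

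For the gauge freedom, replacing $\gamma$ by $e^{cx}\gamma(x)$ multiplies the ratio by $e^{ca}e^{cb}/e^{c(a+b)}=1$, so $B$ is unchanged. I would also note the converse: if two normalized $\gamma$'s give the same $B$, their log-ratio is additive and analytic, hence linear. This shows the freedom is exactly this one.

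The only delicate point I expect is domain bookkeeping: the logarithm branch and the differentiation must stay inside a polydisc where the cocycle identity holds. I would handle this by shrinking to $|a|,|b|,|c|<r/3$, so that every sum $a+b$ and $b+c$ stays within radius $r$.
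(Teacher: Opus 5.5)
Your proof is correct and follows the paper's argument essentially step for step. You take the analytic logarithm, differentiate the additive cocycle at $c=0$ to get $\partial_2\phi(a,b)=\psi(a+b)-\psi(b)$, integrate using $\phi(a,0)=0$, and exponentiate; your $\Psi$ with $\Psi(0)=0$ is the paper's $h$ with its additive constant fixed. Your extra points on branch consistency, domain shrinking, and the converse that the linear character is the only gauge freedom are correct, and the paper leaves them implicit.
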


\begin{proof}
Choose the analytic logarithm $b=\log B$.  The pentagon becomes
\begin{equation}
 b(x,y)+b(x+y,z)=b(x,y+z)+b(y,z).
 \label{eq:additive-cocycle}
\end{equation}
Set $a(x)=\partial_2b(x,0)$ and differentiate in $z$ at $z=0$:
\begin{equation}
 \partial_2b(x,y)=a(x+y)-a(y).
\end{equation}
Choose $h'=a$ and integrate from $0$ to $y$.  Normalization gives
\begin{equation}
 b(x,y)=h(x+y)-h(x)-h(y)+h(0).
\end{equation}
Taking $\log\gamma(x)=-h(x)+h(0)$ proves the result.  Adding a linear
function to $\log\gamma$ is the stated character freedom.
\end{proof}

\begin{lemma}[Odd Fay classification]
Let $q$ be analytic near zero, odd, and normalized by $q'(0)=1$.  Then the
three-term relation \eqref{eq:intro-fay} holds if and only if $q$ solves
$q''=\kappa q$ for a constant $\kappa$.
\end{lemma}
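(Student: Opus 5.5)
The plan is to prove the two implications separately: differentiation for the forward direction, addition laws for the converse. Throughout, $q$ is odd, so $q(0)=0$ and $q''(0)=0$, and $q'(0)=1$.

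\emph{Fay implies the ODE.} Differentiate \eqref{eq:intro-fay} in $y$ and set $y=0$. The derivative is
\begin{equation}
-q'(x-y)q(z)+q'(y-z)q(x)+q(z-x)q'(y).
\end{equation}
At $y=0$ this gives $-q'(x)q(z)+q'(-z)q(x)+q(z-x)=0$. Since $q'$ is even, this is the subtraction law \eqref{eq:subtraction}, $q(z-x)=q'(x)q(z)-q'(z)q(x)$.

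Next differentiate \eqref{eq:subtraction} twice in $z$ and set $z=0$. The left side becomes $q''(-x)=-q''(x)$. The right side becomes $q'(x)q''(0)-q'''(0)q(x)=-q'''(0)q(x)$, using $q''(0)=0$. Hence $q''=\kappa q$ with $\kappa=q'''(0)$. This direction is purely mechanical; the only points to check are the parity bookkeeping (evenness of $q'$, oddness of $q''$) and that $q''(0)=0$ drops the extra term.

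\emph{The ODE implies Fay.} An odd solution of $q''=\kappa q$ with $q'(0)=1$ is unique, so $q$ is $x$, $\sin(\lambda x)/\lambda$ with $\kappa=-\lambda^2$, or $\sinh(\lambda x)/\lambda$ with $\kappa=\lambda^2$. Complex $\lambda$ covers every $\kappa$ uniformly, since $q(x)=\sinh(\mu x)/\mu$ with $\mu^2=\kappa$.

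Rather than check each branch by hand, I would write $q(x)=(e^{\mu x}-e^{-\mu x})/(2\mu)$ and expand $2\mu^2\cdot{}$LHS. Each product is a sum of four exponentials; for example,
\begin{equation}
2\mu^2 q(x-y)q(z)=\tfrac12\bigl(e^{\mu(x-y+z)}-e^{\mu(x-y-z)}-e^{\mu(-x+y+z)}+e^{\mu(-x+y-z)}\bigr).
\end{equation}
Summing over the cyclic images, every term of the form $e^{\mu(\pm(x-y)\pm z)}$ cancels against a term from another cyclic product. For instance, $e^{\mu(x-y+z)}$ from the first product cancels $-e^{\mu(x-y+z)}$ coming from $q(z-x)q(y)$ with signs $(-,+)$.

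I would organize this cancellation by noting that the sum is antisymmetric under any transposition of $x,y,z$ and is a combination of the exponentials $e^{\mu(\pm x\pm y\pm z)}$ restricted to those with exactly two equal signs. An antisymmetric combination supported on these must vanish.

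A cleaner route for the converse is available. For $\mu\neq0$ the product-to-sum identity gives $2\mu^2 q(u)q(v)=\cosh\mu(u+v)-\cosh\mu(u-v)$. Applied to the three products, the six cosh terms cancel pairwise:
\begin{equation}
\cosh\mu(x-y+z)-\cosh\mu(x-y-z)+\cosh\mu(y-z+x)-\cosh\mu(y-z-x)+\cosh\mu(z-x+y)-\cosh\mu(z-x-y)=0,
\end{equation}
since $\cosh$ is even. The case $\mu=0$, where $q(x)=x$, follows by the limit, or directly because $(x-y)z+(y-z)x+(z-x)y=0$. Analyticity is used only to differentiate and to invoke local ODE uniqueness. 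I expect no real obstacle; the sign bookkeeping in the forward direction is the main place errors could creep in.
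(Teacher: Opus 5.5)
Your proof is correct and is essentially the paper's: the forward direction (differentiate at $y=0$ using evenness of $q'$ to get the subtraction law, then differentiate twice in $z$ at $z=0$ using $q''(0)=0$) is identical, and your product-to-sum cancellation with $\mu^2=\kappa$ is an equally elementary variant of the paper's converse, which uses the addition formulas to recover the subtraction law and substitutes it back into Fay. There is one harmless slip in your exploratory exponential aside: the term cancelling $e^{\mu(x-y+z)}$ comes from $q(y-z)q(x)$ (the $-u+v$ term with $u=y-z$, $v=x$), not from $q(z-x)q(y)$, and this does not affect the $\cosh$ argument you actually rely on.
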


\begin{proof}
Differentiate \eqref{eq:intro-fay} in $y$ at $y=0$.  Since $q'$ is even,
one obtains \eqref{eq:subtraction}.  Differentiate it twice in $z$ and put
$z=0$:
\begin{equation}
 q''(-x)=q'(x)q''(0)-q'''(0)q(x).
\end{equation}
Oddness gives $q''(0)=0$ and $q''(-x)=-q''(x)$, hence
$q''(x)=q'''(0)q(x)$.

Conversely, the initial conditions $q(0)=0$, $q'(0)=1$ and uniqueness for
the constant-coefficient ODE give the three functions in
\eqref{eq:q-branches}.  Their elementary addition formulas imply
\eqref{eq:subtraction}, and substituting that formula twice proves the Fay
combination vanishes.  Thus no additional local analytic branch is lost.
\end{proof}

\section{Appendix G: Low-energy and solution checks}
\label{app:checks}

Write
\begin{equation}
 \log\gamma(z)=c_1z+c_2z^2+c_3z^3+c_4z^4+O(z^5).
 \label{eq:loggamma-exp}
\end{equation}
From \eqref{eq:F-gamma} and $f_2=-F/s$, in the convention
\begin{equation}
 f_2(s,t)=-\frac1s+\sum_{k\geq0}\sum_{r=0}^k
 a_{k,r}s^rt^{k-r},
 \label{eq:f2-Wilson}
\end{equation}
one finds
\begin{equation}
\begin{aligned}
 a_{1,0}&=-2c_2,& a_{3,0}&=-4c_4,\\
 a_{3,1}&=-(6c_4+2c_2^2).&&
\end{aligned}
 \label{eq:a-c}
\end{equation}
Meanwhile
\begin{equation}
 q(z)=z\exp[-2c_2z^2-2c_4z^4+O(z^6)].
\end{equation}
Matching this expansion to $q''=\kappa q$ gives
$c_4=2c_2^2/5$.  Hence
\begin{equation}
 a_{3,0}=-\frac25a_{1,0}^2,
 \qquad
 a_{3,1}=-\frac{11}{10}a_{1,0}^2,
 \label{eq:published-check}
\end{equation}
exactly the first nonlinear six-point constraints reported in
Ref.~\cite{Chen:2022shl}.  This agreement is a check, not an input to the
functional proof.

For the trigonometric branch, the general cochain can be written
\begin{equation}
 \gamma(z)=\sqrt{\frac{\lambda z}{\sin(\lambda z)}}
 \exp r(z),\qquad r(-z)=-r(z).
 \label{eq:gamma-sine}
\end{equation}
The odd function $r$ contributes to $F$ through
$\exp[r(s)+r(t)+r(u)]$, a fully symmetric factor.  It cancels from
\eqref{eq:left-ratio}, confirming directly that kernel consistency does not
fix the monodromy-invisible symmetric sector.

As a falsification test, take the normalized nonsine seed
\begin{equation}
 \log F(s,t)=-\eps s t^2(s+t).
\end{equation}
Then $\log B(a,b)=\eps ab(a+b)^2$, and its additive pentagon defect is
\begin{equation}
 \eps abc(c-a),
 \label{eq:nonsine-defect}
\end{equation}
which is nonzero at generic kinematics.  Thus the first six-point identity
excludes the earliest nonpotential deformation independently of any choice
of five-point contact completion.

\section{Appendix H: Precise use of the Veneziano uniqueness theorem}
\label{app:wan-zhou}

Wan and Zhou consider the two-channel ansatz
\begin{equation}
 A(s,t)=-\frac1{st}+\sum_{p\geq q\geq0}a_{p,q}s^{p-q}t^q
\end{equation}
inside a disk below the first heavy state and impose the unsubtracted
fixed-$t$ relation
\begin{equation}
 A(s,t)=-\frac1{st}+\int_{\Lambda^2}^{\infty}
 \frac{ds'}{\pi}\frac{\operatorname{Im}A(s',t)}{s'-s}.
 \label{eq:dispersion-app}
\end{equation}
Nonnegative partial-wave densities make, for every fixed $q$, a rescaled
sequence of $a_{p+q,q}$ into moments of a nonnegative measure on $[0,1]$.
Compact support makes this a determinate Hausdorff moment problem
\cite{Schmuedgen:2017}.  Complete monodromy fixes one parity subsequence;
crossing gives a triangular recursion in $q$; determinacy reconstructs the
other parity subsequence.  Their induction proves all coefficients equal the
Veneziano multiple-zeta coefficients \cite{Wan:2026}.

The present paper uses none of this argument to obtain Fay or to classify
$q$.  It invokes the theorem only after three facts have independently been
established or assumed: the target lies in the KLT module, the sine branch is
selected by a nonzero positive moment, and crossing upgrades the sine ratio
to complete monodromy.  This ordering prevents the downstream uniqueness
theorem from being used circularly as a source of the sine kernel.

\section{Appendix I: Premise-deletion audit}
\label{app:premises}

\begin{table}[t]
\centering
\small
\renewcommand{\arraystretch}{1.12}
\begin{tabular}{p{0.26\linewidth}p{0.68\linewidth}}
\toprule
\textbf{Premise} & \textbf{Exact role in the proof} \\
\midrule
\textbf{Common-planar locality} & Excludes spurious poles in channels that are not planar in both orderings and justifies the regulated saturated endpoint. \\
\textbf{Right four-point BCJ} & Gives \eqref{eq:F-reflection}, symmetry of \(B\), and removes the trivial two-cycle consistency in the second slice. \\
\textbf{Right five-point KK--BCJ} & Reduces the five-point row to \((G_1,G_2)\) and produces \eqref{eq:g3-reduction}--\eqref{eq:g4-reduction}. \\
\textbf{Right six-point BCJ and factorization} & Produce the two exact residue identities \eqref{eq:six-first} and \eqref{eq:six-second}. \\
\textbf{Cyclicity} & Compares two first-slice restrictions of \(G_1\) and forces the scalar pentagon. \\
\textbf{Simultaneous double reversal} & Identifies the second \(g_3\) term by \(\tau\). No independent left-reflection axiom is added. \\
\textbf{Generic Mandelstam kinematics} & Makes \((a,b,c)\) and \((A,C,D)\) independent so the resulting equations are functional identities; continuation then covers special dimensions. \\
\textbf{Target-module membership} & Transfers the identity-model left null vector to the physical target. Proposition~\ref{prop:mixture} shows ordinary four-point axioms do not imply it. \\
\textbf{Crossing} & Upgrades the imaginary sine relation to complete complex monodromy. \\
\textbf{Positivity} & Excludes \(\kappa>0\) and supplies moment determinacy. \\
\textbf{Nonzero heavy weight} & Excludes the linear branch \(\kappa=0\). \\
\textbf{Unsubtracted dispersion} & Identifies Wilson coefficients with the full spectral moments and forbids an independent subtraction polynomial. \\
\bottomrule
\end{tabular}
\end{table}

If positivity is deleted, the hyperbolic branch remains.  If nonzero heavy
weight is deleted, the field-theory solution remains.  If module membership
or crossing is deleted, the identity-model sine ratio does not become a
complete target monodromy equation.  If unsubtracted growth is deleted,
spectrally invisible subtraction data remain.  The assumptions therefore
serve distinct logical functions.

\section{Appendix J: Detailed convex-mixture counterexample}
\label{app:mixture}

For a scale $\alpha>0$, normalize \eqref{eq:veneziano} so its massless term
is always $-1/(st)$.  Positive rescaling moves every massive pole while
preserving the sign of its partial-wave coefficients.  In any dimension in
which the individual amplitudes are positive, the spectral measure of
\eqref{eq:mixture} is
\begin{equation}
 d\mu=w\,d\mu_{\alpha_1}+(1-w)d\mu_{\alpha_2}\geq0.
\end{equation}
Linearity of the Cauchy integral proves the same unsubtracted dispersion
relation on the intersection of the two fixed-$t$ domains.  Crossing and
planar meromorphy are termwise, and the common massless pole retains unit
normalization.

The strict obstruction to a single scale can be phrased without special
functions.  The first and third even low-energy coefficients measure the
second and fourth powers of the scale.  Equality to a delta measure at
$\alpha_{\rm eff}$ would require a probability measure with zero variance
in $\alpha^2$.  The two-point measure has
\begin{equation}
 \operatorname{Var}(\alpha^2)
 =w(1-w)(\alpha_1^2-\alpha_2^2)^2>0.
\end{equation}
Thus all retained ordinary S-matrix properties coexist with failure of a
single sine kernel.  This establishes the exact boundary of the conditional
Veneziano corollary.

\section*{Data and code availability}

The ancillary package contains twenty exact tests and no fitted numerical
input.  Nine end-to-end algebra tests reconstruct the complete five-point
KK--BCJ reduction, first-slice coefficient and pentagon prefactor, both
saturated evaluations, common Gamma cancellation, transport signs, the Fay
variable map, complex monodromy, and the lowest branch-selecting moment.
Eleven independent tests check the two residue orderings and cubic signs,
both nonzero Gram determinants, common-planar channels, compatible splits,
and every raw survivor of the two parent relations. No experimental data are used.

\end{document}